\let\doendproof\endproof
\renewcommand\endproof{~\hfill$\qed$\doendproof}
\title{3D Visibility Representations\\ of $1$-planar Graphs}
\author{Patrizio Angelini\inst{1} \and Michael A. Bekos\inst{1} \and \\Michael Kaufmann\inst{1} \and Fabrizio Montecchiani\inst{2}}
\institute{
Institut f{\"u}r Informatik, Universit{\"a}t T{\"u}bingen, Germany
\\\email{\{angelini,bekos,mk\}@informatik.uni-tuebingen.de}
\and
Dipartimento di Ingegneria, Universit\'a degli Studi di Perugia, Italy
\\\email{fabrizio.montecchiani@unipg.it}
}
\begin{document}
\maketitle

\pagenumbering{arabic}

% ============================================================
\begin{abstract} 
We prove that every $1$-planar graph $G$ has a $z$-parallel visibility representation, i.e., a 3D visibility representation in which the vertices are isothetic disjoint rectangles parallel to the $xy$-plane, and the edges are unobstructed $z$-parallel visibilities between pairs of rectangles. In addition, the constructed representation is such that there is a plane that intersects all the rectangles, and this intersection defines a bar $1$-visibility representation of $G$.
\end{abstract}
% ============================================================

% ============================================================
\section{Introduction}
\label{sec:introduction}
% ============================================================

Visibility representations are a classic research topic in Graph Drawing and Computational Geometry. Motivated by VLSI applications, seminal papers studied \emph{bar visibility representations}  of planar graphs (see, e.g.,~\cite{DBLP:journals/dcg/RosenstiehlT86,DBLP:journals/dcg/TamassiaT86,t-prg-84,DBLP:conf/compgeom/Wismath85}), in which vertices are represented as  non-overlapping horizontal segments, called \emph{bars}, and edges correspond to vertical \emph{visibilities} connecting pairs of bars, i.e., vertical segments that do not intersect any bar other than at their endpoints.

In order to represent non-planar graphs, more recent papers investigated models in which either two visibilities are allowed to cross, or a visibility can ``go through'' a vertex. Two notable examples are rectangle visibility representations and bar $k$-visibility representations. In a \emph{rectangle visibility representation} of a graph, every vertex is represented as an axis-aligned rectangle and two vertices are connected by an edge using either a horizontal or a vertical visibility (see, e.g.,~\cite{DBLP:journals/dam/DeanH97,DBLP:journals/comgeo/HutchinsonSV99,DBLP:conf/cccg/Shermer96}). A \emph{bar $k$-visibility representation} is a bar visibility representation in which each visibility intersects at most $k$ bars (see, e.g.,~\cite{DBLP:journals/jgaa/Brandenburg14,DBLP:journals/jgaa/DeanEGLST07,DBLP:journals/jgaa/Evans0LMW14}).

Extensions of visibility representations to 3D have also been studied. Of particular interest for us are \emph{$z$-parallel visibility representations (ZPRs)}, in which the vertices of the graph are isothetic disjoint rectangles parallel to the $xy$-plane, and the edges are visibilities parallel to the $z$-axis. Bose et al.~\cite{DBLP:journals/jgaa/BoseEFHLMRRSWZ98} proved that $K_{22}$ admits a ZPR, while $K_{56}$ does not. {\v{S}}tola~\cite{DBLP:conf/gd/Stola08} reduced this gap by showing that $K_{51}$ does not admit any ZPR. If the rectangles are restricted to unit squares, then $K_7$ is the largest representable complete graph~\cite{Fekete1995}. Other 3D visibility models are  box visibility representations~\cite{DBLP:journals/ijcga/FeketeM99}, and  2.5D box visibility representations~\cite{DBLP:conf/gd/ArleoBGEGLMMWW16}.

\begin{figure}[t]
    \centering
    \subfloat[\label{fig:g}{}]
    {\includegraphics[width=0.33\columnwidth,page=1]{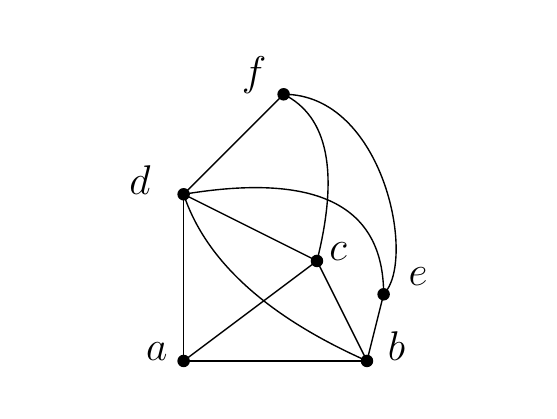}}
    \hfil
    \subfloat[\label{fig:sgamma}{}]
    {\includegraphics[width=0.33\columnwidth,page=2]{figures/example}}
    \hfil
    \subfloat[\label{fig:gamma}{}]
    {\includegraphics[width=0.33\columnwidth,page=3]{figures/example}}

    \caption{%
    (a) A $1$-planar graph $G$.
    (b) The intersection of a $1$-visible ZPR $\gamma$ of $G$ with the plane $Y=0$; the red (bold) visibilities traverse a bar.
    (c) The projection to the $yz$-plane of $\gamma$ (only the red visibilities are shown).}
    \label{fig:example}
\end{figure}

In this paper we study 3D visibility representations of $1$-planar graphs. We recall that a graph is \emph{$1$-planar} if it can be drawn with at most one crossing per edge (see, e.g.,~\cite{bsw-bsgr-AMS83,DBLP:journals/corr/KobourovLM17,pt-gdfce-C97}). The  $1$-planar graphs are among the most investigated families of ``beyond planar graphs'', i.e., graphs that extend planarity by forbidding specific edge crossings configurations (see, e.g.,~\cite{hong_et_al:DR:2017:7038,DBLP:conf/ictcs/Liotta14}). Brandeburg~\cite{DBLP:journals/jgaa/Brandenburg14} and Evans et al.~\cite{DBLP:journals/jgaa/Evans0LMW14} proved that every $1$-planar graph admits a bar $1$-visibility representation. Later, 
%motivated by the fact that crossings between edges are arguably more readable than crossings between edges and vertices, 
Biedl et al.~\cite{DBLP:conf/compgeom/BiedlLM16}  proved that a $1$-plane graph (i.e., an embedded $1$-planar graph) admits a rectangle visibility representation if and only if it does not contain any of a set of obstructions, and that not all $1$-planar graphs can be realized,  regardless of their $1$-planar embedding. On the other hand, every $1$-planar graph can be represented with vertices that are orthogonal polygons with several reflex corners~\cite{DBLP:conf/gd/GiacomoDELMMW16}. 
%Both these results study visibility representations that are $1$-planar, in the sense that each (horizontal or vertical) visibility is crossed by at most another (vertical or horizontal) visibility. 
Our goal is to represent $1$-planar graphs with vertices drawn as rectangles (rather than more complex polygons) by exploiting the third dimension. We prove that every $1$-planar graph $G$ has a ZPR $\gamma$.  In addition, $\gamma$ is \emph{$1$-visible}, i.e., there is a plane that is orthogonal to the rectangles of $\gamma$ and such that its intersection with $\gamma$ defines a bar $1$-visibility representation of $G$ (see Section~\ref{sec:preliminaries} for formal definitions). 

Our main contribution is summarized by the following theorem.
\begin{theorem}
Every $1$-planar graph $G$ with $n$ vertices admits a $1$-visible ZPR $\gamma$ in $O(n^3)$ volume. Also, if a $1$-planar embedding of $G$ is given as part of the input, then $\gamma$ can be computed in $O(n)$ time.
\label{thm:main}
\end{theorem}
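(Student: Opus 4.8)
The plan is to derive the three-dimensional representation from a planar one. By the results of Brandenburg~\cite{DBLP:journals/jgaa/Brandenburg14} and Evans et al.~\cite{DBLP:journals/jgaa/Evans0LMW14}, the graph $G$ admits a bar $1$-visibility representation $\Gamma$, which I place in the $xz$-plane with the bars horizontal (parallel to the $x$-axis) at integer heights along $z$; this $\Gamma$ will be exactly the intersection of the final $\gamma$ with the plane $Y=0$, so the $1$-visibility requirement is obtained for free. I then \emph{lift} $\Gamma$ into space: each bar $B$, with its $x$-range and height $z_B$ inherited from $\Gamma$, is thickened into a rectangle $R_B$ parallel to the $xy$-plane by assigning it a $y$-interval $I_B\ni 0$, and each edge of $G$ is drawn as a $z$-parallel visibility placed at a suitable depth $y_v$. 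Disjointness of the rectangles comes at no cost: two rectangles at distinct heights lie in parallel planes, while two at the same height already have disjoint $x$-ranges in $\Gamma$.

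It remains to choose the intervals $I_B$ and the depths $y_v$ so that every edge of $G$ is realized by an \emph{unobstructed} $z$-parallel visibility. In $\Gamma$ an edge $(b_1,b_2)$ is a vertical visibility at some abscissa $x_v$ that either crosses no bar or crosses exactly one bar $B$. A non-crossing edge is routed at depth $y_v=0$: since no bar of $\Gamma$ lies strictly between its endpoints at abscissa $x_v$, no rectangle obstructs it, and $0\in I_{B_1}\cap I_{B_2}$ guarantees that the endpoints see each other. A crossing edge must instead be routed at a depth $y_v\in I_{B_1}\cap I_{B_2}$ with $y_v\notin I_B$, so that it reaches both endpoints but passes beside the traversed rectangle in the $y$-direction. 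A key observation keeps these choices independent: extending a rectangle $R_B$ in $y$ can interfere only with visibilities whose abscissa lies in the $x$-range of $B$ and for which $B$ is an intermediate bar, and in $\Gamma$ these are precisely the edges that cross $B$, which are already accounted for above.

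Thus the whole construction reduces to a combinatorial assignment: give each bar a $y$-extent so that, for every crossing edge $(b_1,b_2)$ that traverses $B$, the rectangle $R_B$ is strictly \emph{shallower} than both $R_{B_1}$ and $R_{B_2}$ on the side along which $(b_1,b_2)$ is routed; the depth $y_v$ can then be taken beyond the extent of $R_B$ but within that of $R_{B_1}$ and $R_{B_2}$. I expect this to be the main obstacle. It amounts to orienting a ``shallower-than'' relation acyclically, and the difficulty is that a bar traversed by one edge may itself be an endpoint of an edge traversing another bar, which could create cyclic demands. To break such cycles I would exploit that both half-spaces $y>0$ and $y<0$ are available, assigning conflicting visibilities to opposite sides, together with the local kite structure of the crossings in a $1$-planar embedding, which limits how traversed bars and their surrounding endpoints interleave. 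A topological order of the resulting acyclic relations then yields at most $O(n)$ distinct depths per side.

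Finally, the size and the running time follow by counting. A bar $1$-visibility representation of a $1$-planar graph has $O(n)$ bars and $O(n)$ edges and fits in an $O(n)\times O(n)$ grid, so the $x$- and $z$-extents of $\gamma$ are $O(n)$; with $O(n)$ distinct depths the $y$-extent is $O(n)$ as well, giving $O(n^3)$ volume. Given a $1$-planar embedding, $\Gamma$ is computed in $O(n)$ time, and the depth assignment is a topological sort over $O(n)$ constraints, so $\gamma$ is obtained in $O(n)$ time overall.
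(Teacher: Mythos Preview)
Your high-level plan matches the paper's exactly: start from a bar $1$-visibility representation in the $xz$-plane, thicken bars into rectangles, route non-crossing edges at $y=0$, and route each crossing edge beside the traversed rectangle by making that rectangle ``shallower'' than the two endpoint rectangles on one of the two sides $y>0$ or $y<0$. You also correctly locate the crux: the ``shallower-than'' constraints must be orientable acyclically (one relation per side), after which a topological sort gives the $y$-extents.

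The gap is that you stop precisely where the paper's real work begins. You write that you ``would exploit'' the two half-spaces and the kite structure to break cycles, but give no rule for deciding \emph{which} side each crossing edge goes to, and no argument that the resulting partial orders are acyclic. The paper needs specific machinery here. First, it does not use an arbitrary bar $1$-visibility representation: it uses Brandenburg's construction, which comes with a planar $st$-multigraph $P_{\mathcal O}$ underlying $\gamma_1$, empty kites around every crossing, and the guarantee that each bar is traversed by at most one visibility. Second, it gives explicit orientation rules (R.1--R.3) that assign each crossing edge to the top side or the bottom side according to whether the corresponding kite face of $P_{\mathcal O}$ is a right wing, a left wing, or a diamond, and according to which of the two possible bars the visibility actually traverses. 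Third---and this is the nontrivial lemma---acyclicity of each of the two resulting partial orientations is proved by a contradiction argument that hinges on the bimodality of vertices in a planar $st$-graph and on the position of the source and sink on the outer face; a hypothetical directed cycle would force an edge at the boundary of a maximal ``reversed'' subpath to violate one of R.1 or R.3.

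So your plan is right, but as it stands it would not go through with a generic bar $1$-visibility representation, and even with Brandenburg's, the side-assignment rule and the acyclicity proof are the substance of the theorem rather than a detail to be filled in later.
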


An embedding is needed, as recognizing $1$-planar graphs is \textsc{NP}-complete~\cite{DBLP:journals/algorithmica/GrigorievB07,DBLP:journals/jgt/KorzhikM13}. An example of a $1$-visible ZPR is shown in Fig.~\ref{fig:example}. We also remark that, as pointed out by Kobourov et al. in a recent survey~\cite{DBLP:journals/corr/KobourovLM17}, very little is known on 3D representations of $1$-planar graphs, and our result sheds some light on this problem.

From a high-level perspective,  to prove Theorem~\ref{thm:main} (see Section~\ref{sec:main}) we start by constructing a bar $1$-visibility representation $\gamma_1$ of $G$, which is then used as the intersection of the ZPR $\gamma$ with the plane $Y=0$ (see, e.g., Fig.~\ref{fig:sgamma}). In particular, we transform each bar $b$ of $\gamma_1$ into a rectangle $R_b$ by computing the $y$-coordinates of its top and bottom sides, so that each visibility in $\gamma_1$ that traverses a bar $b$ can be represented as a visibility in $\gamma$ that passes above or below $R_b$ (see, e.g., Fig.~\ref{fig:gamma}). This is done by using two suitable acyclic orientations of the edges of~$G$. 

Some proofs and technicalities have been moved to the appendix.

%Preliminaries and basic results are in Section~\ref{sec:preliminaries}. The proof of Theorem~\ref{thm:main} is in Section~\ref{sec:main}. Conclusions and open problems are in Section~\ref{sec:conclusions}.

% ============================================================
\section{Preliminaries and definitions}
\label{sec:preliminaries}
% ============================================================
We assume familiarity with the concepts of planar drawings and planar embeddings, see, e.g.,~\cite{DBLP:books/ph/BattistaETT99}.
%A \emph{drawing} $\Gamma$ of a graph $G$ maps each vertex of $G$ to a point of the plane and each edge of $G$ to a Jordan arc connecting its two endpoints. If two edges are incident to a same vertex, then they do not cross in $\Gamma$. Otherwise, they have at most one common interior point where they properly cross. A graph $G$ is \emph{planar} if it admits a \emph{planar} drawing, i.e., a crossing-free drawing. A planar drawing subdivides the plane into topologically connected regions, called \emph{faces}. The infinite region is the \emph{outer face}. A \emph{planar embedding} of $G$ is an equivalence class of planar drawings of $G$ with a homeomorphic set of faces and the same face as outer face. A \emph{plane graph} is a planar graph with a planar embedding.
The \emph{planarization} of a non-planar drawing is a planar drawing obtained by replacing every crossing with a \emph{dummy vertex}.  An \emph{embedding} of a graph is an equivalence class of  drawings whose planarized versions have the same planar embedding. A \emph{$1$-plane} graph is a 1-planar graph with a \emph{$1$-planar embedding}, i.e., an embedding where each edge is incident to at most one dummy vertex. A \emph{kite} is a $1$-plane graph isomorphic to $K_4$ in which the outer face is composed of four vertices and four crossing-free edges, while the remaining two edges cross each other. Given a $1$-plane graph $G$ and a kite $K=\{a,b,c,d\}$, with $K \subseteq G$, kite $K$  is \emph{empty} if it  contains no vertex of $G$ inside the 4-cycle $\langle a,b,c,d \rangle$.
%A pair of crossing edges of $G$ \emph{forms an empty kite} if their four end-vertices induce an empty kite.

A \emph{(partial) orientation} $\mathcal O$ of a graph $G$ is an assignment of directions to (a subset of) the edges of $G$. The graph obtained by orienting the edges of $G$ according to  $\mathcal O$ is the directed (or mixed) graph $G_{\mathcal O}$. A \emph{planar $st$-(multi)graph} $G$ is a plane acyclic directed (multi)graph with a single source $s$ and a single sink $t$, with both $s$ and $t$ on its outer face~\cite{DBLP:journals/tcs/BattistaT88}. The sets of incoming and outgoing edges incident to each vertex $v$ of $G$ are \emph{bimodal}, i.e., they are contiguous in the cyclic ordering of the edges at $v$.  Each face $f$ of $G$  is bounded by two directed paths with a common origin and destination, called the \emph{left path} and \emph{right path} of $f$. Face $f$ is the \emph{left} (resp., \emph{right}) face for all vertices on its right (resp., left) path except for the origin and for the destination. A  \emph{topological ordering} of a directed acyclic (multi)graph is a linear ordering of its vertices such that for every directed edge from vertex $u$ to vertex $v$, $u$ precedes $v$ in the ordering.

A set $\mathcal R$ of disjoint rectangles in $\mathbb{R}^3$ is \emph{$z$-parallel}, if each rectangle has its sides parallel to the $x$- and $y$-axis. Two rectangles of $\mathcal R$ are \emph{visible} if and only if they contain the ends of a closed cylinder $C$ of radius $\varepsilon>0$ parallel to the $z$-axis and orthogonal to the $xy$-plane, and that does not intersect any other rectangle.

\begin{definition}
A \emph{$z$-parallel visibility representation (ZPR)} $\gamma$ of a graph $G$ maps the set of vertices of $G$ to a $z$-parallel set of disjoint rectangles, such that for each edge of $G$ the two corresponding rectangles are visible\footnote{Our visibility model is often called \emph{weak}, to be distinguished with the \emph{strong} model in which visibilities and edges are in bijection. While this distinction is irrelevant when studying complete graphs (e.g., in~\cite{DBLP:journals/jgaa/BoseEFHLMRRSWZ98,DBLP:conf/gd/Stola08}), the weak model is commonly adopted to represent sparse non-planar graphs in both 2D and 3D (see, e.g.,~\cite{DBLP:conf/gd/ArleoBGEGLMMWW16,DBLP:conf/compgeom/BiedlLM16,DBLP:journals/jgaa/Brandenburg14,DBLP:conf/gd/GiacomoDELMMW16,DBLP:journals/jgaa/Evans0LMW14}).}. If there is a plane that is orthogonal to the rectangles of $\gamma$ and such that its intersection with $\gamma$ defines a bar $k$-visibility representation of $G$, then  $\gamma$  is a \emph{$k$-visible ZPR}.
\end{definition}

% ============================================================
\section{Proof of Theorem~\ref{thm:main}}
\label{sec:main}
% ============================================================

Let $G=(V,E)$ be a $1$-plane graph with $n$ vertices. To prove Theorem~\ref{thm:main}, we present a linear-time algorithm that takes $G$ as input and computes a $1$-visible ZPR of $G$ in cubic volume. The algorithm works in three steps, described in the following. %Figure~\ref{fig:running} in Appendix shows a running example of the algorithm.

\medskip\noindent{\it Step 1.} We compute a bar $1$-visibility representation $\gamma_1$ of $G$ by applying Brandenburg's linear-time algorithm~\cite{DBLP:journals/jgaa/Brandenburg14}, which  produces a representation with integer coordinates on a grid of size $O(n^2)$. This algorithm consists of the following steps.
\begin{inparaenum}[a)]
\item A $1$-plane multigraph $G'=(V,E'\supseteq E)$ is computed from $G$ such that: The four end-vertices of each pair of crossing edges of $G'$ induce an empty kite; no edge can be added to $G'$ without introducing crossings; if two vertices are connected by a set of $k>1$ parallel edges, then all of them are uncrossed and non-homotopic. We remark that the embedding of $G'$ may differ from the one of $G$ due to the rerouting of some edges.
\item Let $P$ be the plane multigraph obtained from $G'$ by removing all pairs of crossing edges. Let $\mathcal O$ be an orientation of $P$ such that $P_{\mathcal O}$  is a planar $st$-multigraph. Then the algorithm by Tamassia and Tollis~\cite{DBLP:journals/dcg/TamassiaT86} is applied to compute a bar visibility representation of $P_{\mathcal O}$.
\item Finally, all pairs of crossing edges are reinserted through a postprocessing step that extends the length of some bars so to introduce new visibilities. The newly introduced visibilities traverse at most one bar each. In addition, each bar is traversed by at most one visibility.
\end{inparaenum}

\medskip\noindent{\it Step 2.} We transform each bar $b_v$ of $\gamma_1$ to a preliminary rectangle $R_v$. We assume that $\gamma_1$ lies on the $xz$-plane and that the bars are parallel to the $x$-axis. Let $z(v)$ be the $z$-coordinate of $b_v$ and let $x_L(v)$ and $x_R(v)$ be the $x$-coordinates of the left and right endpoints of $b_v$, respectively. The rectangle $R_v$ lies on the plane  parallel to the $xy$-plane with equation $Z=z(v)$. Also, its left and right sides have $x$-coordinates equal to $x_L(v)$ and $x_R(v)$, respectively. It remains to compute the $y$-coordinates of the top and bottom sides of $R_v$. We preliminarily set the $y$-coordinates of the bottom sides and of the top sides of all the rectangles to $-1$ and $+1$, respectively. All the visibilities of $\gamma_1$ that do not traverse any bar can be replaced with cylinders of radius $\varepsilon < \frac{1}{2}$. Let $P'$ be the subgraph of $G'$ induced by all such visibilities, and let  $\gamma_2$ be the resulting ZPR. The next lemma follows.

\begin{lemma}\label{le:gammaprime}
$\gamma_2$ is a ZPR of $P'$.
\end{lemma}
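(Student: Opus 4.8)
The plan is to verify directly the two requirements in the definition of a ZPR for the pair $(\gamma_2, P')$: that the rectangles $R_v$ form a $z$-parallel set of pairwise disjoint rectangles, and that every edge of $P'$ (i.e., every edge of $G'$ whose visibility in $\gamma_1$ traverses no bar) is witnessed by an admissible cylinder. The guiding observation is that the plane $Y=0$ cuts each rectangle $R_v$ exactly in its originating bar $b_v$, so that $\gamma_2 \cap \{Y=0\}$ reproduces the non-traversing part of $\gamma_1$; the whole argument amounts to showing that thickening these obstruction-free $2$D visibilities by a small radius $\varepsilon$ in the $y$-direction keeps them obstruction-free in $3$D.

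First I would check that the $R_v$ are $z$-parallel and pairwise disjoint. Being $z$-parallel is immediate, since $R_v$ lies in the plane $Z=z(v)$ and its four sides are parallel to the $x$- and $y$-axes. For disjointness, take two rectangles $R_u, R_v$: if $z(u)\neq z(v)$ they lie in distinct horizontal planes; if $z(u)=z(v)$, then $b_u$ and $b_v$ are distinct bars at the same height in $\gamma_1$ and hence, being non-overlapping segments, have disjoint $x$-ranges, which forces $R_u\cap R_v=\emptyset$ because all rectangles share the common $y$-range $[-1,+1]$.

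The core is the visibility of each edge $e=(u,v)$ of $P'$; assume without loss of generality $z(u)<z(v)$. Since $e$ traverses no bar in $\gamma_1$, its set of admissible $x$-coordinates is a non-degenerate interval contained in the common $x$-range of $b_u$ and $b_v$ and avoiding every intermediate bar; I would pick $x_e$ in its interior and place the witnessing cylinder $C_e$ of radius $\varepsilon<\frac{1}{2}$ on the axis $\{(x_e,0,z):z\in[z(u),z(v)]\}$. For $\varepsilon$ small enough the two end-disks, centered at $(x_e,0)$ in the planes $Z=z(u)$ and $Z=z(v)$, are contained in $R_u$ and $R_v$. To see that $C_e$ meets no other rectangle $R_w$, distinguish three cases: if $z(w)<z(u)$ or $z(w)>z(v)$ the cylinder never reaches $Z=z(w)$; if $z(w)\in\{z(u),z(v)\}$ the only cross-section of $C_e$ in that plane is an end-disk, which lies inside $R_u$ or $R_v$ and hence misses the disjoint $R_w$; and if $z(w)\in(z(u),z(v))$, non-traversal gives $x_e\notin[x_L(w),x_R(w)]$, so by the integrality of the grid coordinates the horizontal clearance between $x_e$ and $R_w$ is at least a constant and exceeds $\varepsilon$, whence the cross-section of $C_e$ at height $z(w)$ is disjoint from $R_w$.

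I expect the only delicate point to be this last clearance estimate: it is where the combinatorial fact ``the $2$D visibility avoids every intermediate bar'' must be upgraded to the quantitative statement ``the thickened $3$D cylinder avoids every intermediate rectangle''. This is exactly where the grid structure of Brandenburg's representation and the uniform choice $\varepsilon<\frac{1}{2}$ enter, and choosing $x_e$ strictly inside the visibility interval of $e$ simultaneously secures the containment of the end-disks. Combining the three parts then shows that $\gamma_2$ is a ZPR of $P'$.
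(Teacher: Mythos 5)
Your verification is correct and matches the paper's (implicit) reasoning: the paper states this lemma as an immediate consequence of the construction (``The next lemma follows.'') and provides no separate proof, and your argument simply spells out the direct check---$z$-parallelism, disjointness via distinct $Z$-planes or disjoint $x$-ranges, and the thickening of obstruction-free visibilities into cylinders of radius $\varepsilon<\frac{1}{2}$---that the authors leave to the reader. The one point worth tightening is the final clearance estimate: rather than appealing loosely to ``integrality of the grid,'' you should fix $x_e$ so that its distance to the $x$-range of every non-traversed intermediate bar is at least $\frac{1}{2}$ (e.g., at the midpoint of a unit grid cell of the visibility interval), which is what actually guarantees the cylinder of radius $\varepsilon<\frac{1}{2}$ misses $R_w$.
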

\medskip\noindent{\it Step 3.} To realize the remaining visibilities of $\gamma_1$, we  modify the $y$-coordinates of the rectangles. The idea is to define two partial orientations of the edges of $P$, denoted by $\mathcal O_1$ and $\mathcal O_2$,  to assign the final $y$-coordinates of the top sides and of the bottom sides of the rectangles, respectively. In particular, an edge oriented from $u$ to $v$ in $\mathcal O_1$ ($\mathcal O_2$) encodes that the top side (bottom side) of $R_u$ will have $y$-coordinate greater (smaller) than the one of $R_v$. The orientations are such that if two vertices $u$ and $v$ see each other through a third vertex $w$ in $\gamma_1$, then their top (bottom) sides both have larger (smaller) $y$-coordinate than the one of $w$. Hence, both $\mathcal O_1$ and $\mathcal O_2$ are defined based on $\gamma_1$, using the following three rules.

\begin{figure}[t]
    \centering
    \subfloat[\label{fig:right wing}{}]
    {\includegraphics[width=0.165\columnwidth,page=2]{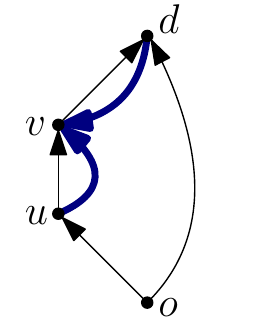}}
    \hfil
    \subfloat[\label{fig:right wing-dr}{}]
    {\includegraphics[width=0.165\columnwidth,page=5]{figures/faces}}
    \hfill
    \subfloat[\label{fig:left wing}{}]
    {\includegraphics[width=0.165\columnwidth,page=1]{figures/faces}}
    \hfil
    \subfloat[\label{fig:left wing-dr}{}]
    {\includegraphics[width=0.165\columnwidth,page=4]{figures/faces}}
    \hfill
    \subfloat[\label{fig:diamond}{}]
    {\includegraphics[width=0.165\columnwidth,page=3]{figures/faces}}
    \hfil
    \subfloat[\label{fig:diamond-dr}{}]
    {\includegraphics[width=0.165\columnwidth,page=6]{figures/faces}}
    \caption{(a)-(b) A right wing. (c)-(d) A left wing. (e)-(f) A diamond. }
    \label{fig:faces}
\end{figure}

Let $f=\{o,u,v,d\}$ be a face of $P_{\mathcal O}$ (and hence of $P$) such that $\{o,u,v,d\}$ are part of an empty kite of $G'$.  In what follows we assume that $o$ is the origin and $d$ is the destination of the face. We borrow some terminology from~\cite{DBLP:journals/jgaa/Brandenburg14}, refer to Fig.~\ref{fig:faces} (the black thin edges only). If the left (resp., right) path of $f$ is composed of the single edge $(o,d)$, then $f$ is called a \emph{right wing} (resp., \emph{left wing}). If both the left path and the right path of $f$ consist of two edges, then $f$ is a \emph{diamond}.
\begin{inparaenum}[{\bf (R.}1{\bf )}]

\item\label{r:rightwing} If $f$ is a right wing, we may assume that $b_v$ is above $b_u$. Consider the restriction of $\gamma_1$ with respect to $\{o,u,v,d\}$. Either the visibility between $b_u$ and $b_d$ traverses $b_v$ (as in Fig.~\ref{fig:right wing-dr}), or  the visibility between $b_o$ and $b_v$ traverses $b_u$. In both cases we only orient edges in $\mathcal O_1$. In the first case we orient $(u,v)$ from $u$ to $v$ and $(v,d)$ from $d$ to $v$ (see the green bold edges in Fig.~\ref{fig:right wing}). In the second case we orient $(o,u)$ from $o$ to $u$ and $(u,v)$ from $v$ to $u$.
\item\label{r:leftwing} If $f$ is a left wing, we may assume that $b_v$ is above $b_u$. 
%Consider the restriction of $\gamma_1$ with respect to $\{o,u,v,d\}$. 
As for a right wing, either the visibility between $b_u$ and $b_d$ traverses $b_v$ (as in Fig.~\ref{fig:left wing-dr}), or  the visibility between $b_o$ and $b_v$ traverses $b_u$.  We orient the edges as for a right wing, but we only consider $\mathcal O_2$ (see, e.g., the blue bold edges in Fig.~\ref{fig:left wing}).
\item\label{r:diamond} If $f$ is a diamond, we may assume that $b_u$ is to the left of $b_v$. %Consider the restriction of $\gamma_1$ with respect to $\{o,u,v,d\}$. 
Either the visibility between $b_o$ and $b_d$ traverses $b_v$, or the visibility between $b_o$ and $b_d$ traverses $b_u$. In the first case we orient $(o,v)$ from $o$ to $v$ and $(v,d)$ from $d$ to $v$ in $\mathcal O_1$ (see the green bold edges in Fig.~\ref{fig:diamond}). In the second case we orient $(o,u)$ from $o$ to $u$ and $(u,d)$ from $d$ to $u$ in $\mathcal O_2$.
\end{inparaenum}

\noindent By applying the above three rules for all left and right wings, and for all diamonds of $P_{\mathcal O}$, we obtain $\mathcal O_1$ and $\mathcal O_2$. Note that the above procedure is correct, in the sense that no edge is assigned a direction twice. This is due to the fact that a direction in $\mathcal O_1$ (resp., $\mathcal O_2$) is assigned to an edge only if it belongs to the right (resp., left) path of a right (resp., left) wing  or of a diamond. On the other hand, an edge belongs only to one right path and to one left path.
In what follows, we prove that both $P_{\mathcal O_1}$ and $P_{\mathcal O_2}$ are acyclic, i.e., they have no oriented cycles.% (the edges that are not oriented can be ignored).

\begin{lemma}\label{le:acyclic}
Both $P_{\mathcal O_1}$ and $P_{\mathcal O_2}$ are acyclic.
\end{lemma}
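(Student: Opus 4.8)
The plan is to prove the statement for $\mathcal O_1$ and to obtain the claim for $\mathcal O_2$ by the symmetric argument, in which the roles of top and bottom sides, of right and left wings, and of the two cases of the diamond rule are interchanged. The first thing I would do is record exactly how the oriented edges of $\mathcal O_1$ sit inside the acyclic orientation $\mathcal O$. Inspecting the three rules shows that the edges directed in $\mathcal O_1$ always come in a \emph{wedge} pointing into a single vertex, the \emph{apex}: in the first case of the right-wing rule the edges become $u\to v$ and $d\to v$, in the second case they become $o\to u$ and $v\to u$, and in the first case of the diamond rule they become $o\to v$ and $d\to v$. Thus every edge of $\mathcal O_1$ has its head at an apex, and in each wedge exactly one edge agrees with $\mathcal O$ (a \emph{forward} edge, increasing the topological number of $\mathcal O$) while the other is \emph{reversed} (decreasing it). In particular no function monotone in the $\mathcal O$-ordering can certify acyclicity, so a purely potential-based argument is ruled out and I would proceed by contradiction, using the planar $st$-structure.

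Assume $C$ is a directed cycle of $P_{\mathcal O_1}$ and let $t$ be the vertex of $C$ with maximum topological number in $\mathcal O$. Both neighbours of $t$ along $C$ have smaller number, hence the edge of $C$ leaving $t$ is reversed and the edge entering $t$ is forward. By the wedge classification, the forward in-edge forces $t$ to be an apex of some wing or diamond $f'$, i.e.\ a bar $b_t$ that is \emph{traversed} by one of the visibilities realised in Step~3 and routed \emph{above} the apex; the reversed out-edge forces $t$ to be the \emph{upper} endpoint of the right path of another such face $f$, i.e.\ an endpoint of a traversing visibility of $f$ that is again routed above its apex. The core of the proof is to show that a single bar $b_t$ cannot carry both roles, given that $f$ and $f'$ arise from empty kites of $G'$ and that $P_{\mathcal O}$ is a planar $st$-multigraph.

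To derive the contradiction I would localise the configuration around $b_t$ in $\gamma_1$. As the upper endpoint in $f$, the bar $b_t$ is the topmost bar of $f$ and bounds a traversing visibility of $f$ from above; as an apex in $f'$, the bar $b_t$ is crossed at height $z(t)$ by a traversing visibility of $f'$, which therefore needs a bar strictly above $b_t$ inside $f'$. I would then combine the emptiness of the kites $f$ and $f'$ with the bimodality of the edges incident to $t$ and the single-sink property of the $st$-orientation to conclude that the two kites must overlap in a way that either violates planarity or forces a vertex to lie inside one of the kites, contradicting emptiness. Examining the vertex of $C$ of \emph{minimum} topological number (where, dually, the entering edge is reversed and the leaving edge is forward, so the vertex is simultaneously an apex and a wedge source) yields the companion contradiction, and hence no cycle $C$ exists.

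The main obstacle is precisely this last step, namely excluding that one bar is both the upper endpoint of a routed-above visibility and itself traversed from above on a closed directed walk. The difficulty is that this situation is not forbidden by mere counting: Step~1 guarantees only that each bar is traversed by \emph{at most one} visibility, not that a bar is never simultaneously traversed and an endpoint of another visibility. The argument must therefore exploit the actual geometry of $\gamma_1$ and the empty-kite hypothesis, and the careful distinction between the right-wing, left-wing, and diamond contributions is where the real work lies.
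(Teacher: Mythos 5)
Your setup is sound and close in spirit to the paper's: proof by contradiction on a hypothetical directed cycle $C$, the observation that the rules orient edges in ``wedges'' whose heads meet at an apex that is always an \emph{internal} vertex of the right path of its face (never its origin or destination), and the choice of an extremal vertex of $C$ with respect to the acyclic orientation $\mathcal O$, so that the incoming edge of $C$ there is forward and the outgoing one is reversed. The paper does essentially the same, except that it extracts a maximal subsequence $S=\langle e_i,\dots,e_j\rangle$ of reversed edges and argues at its two endpoints $a$ and $b$ rather than at the vertex of maximum topological number; the two devices are interchangeable. (A small inaccuracy: in the second case of the right-wing rule the tail of the reversed edge is the internal vertex $v$ of the right path, not its upper endpoint $d$, so your classification of the roles of $t$ needs one more subcase.)

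The genuine gap is in the final step, which you explicitly leave open and for which you propose the wrong mechanism. The contradiction does not come from the geometry of $\gamma_1$, nor from two kites overlapping or failing to be empty; emptiness of kites is needed only to exclude the degenerate case $c=2$ of a directed $2$-cycle on two non-homotopic parallel edges, a case your sketch does not address even though $P$ is a multigraph. The actual argument is purely combinatorial: the reversed edge $e_j$ with head $b$ in $\mathcal O_1$ lies on the right path of a face $f$ and is \emph{outgoing} from $b$ in $P_{\mathcal O}$; bimodality of the edges around $b$, together with the fact that the source $s$ of the planar $st$-multigraph $P_{\mathcal O}$ lies on the outer face, then forces $b$ to be the \emph{origin} of $f$ (any incoming edge wedged between $e_j$ and $e_{j+1}$ in the rotation at $b$ would violate one of these properties). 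But the rules never direct an edge of a right path into the origin of its face --- every oriented edge points at an internal apex --- so $e_j$ cannot have head $b$ in $\mathcal O_1$, a contradiction; the symmetric argument at $a$ handles the other orientation of $C$. Without this (or an equivalent) argument your proof does not close: as you yourself note, the two roles assigned to your extremal vertex are not numerically or geometrically incompatible per se, so ``combining emptiness, bimodality and the single-sink property'' is at this point a restatement of the problem rather than a proof.
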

\begin{sketch}
We prove that  $P_{\mathcal O_1}$ is acyclic. The argument for  $P_{\mathcal O_2}$ is symmetric. Suppose, for a contradiction, that $P_{\mathcal O_1}$ contains a directed cycle $C =\langle e_1,e_2,\dots,e_c \rangle$, as shown in Fig.~\ref{fig:cycle}. First, note that $c>2$. If $c=2$, there are two non-homotopic parallel edges that are both part of the right path of a right wing or of a diamond in $P_\mathcal O$. But this is impossible since each pair of crossing edges in $G'$ forms an empty kite. 
\begin{figure}[t]
    \centering
    \subfloat[\label{fig:cycle}{$C$}]
    {\includegraphics[width=0.3\columnwidth,page=1]{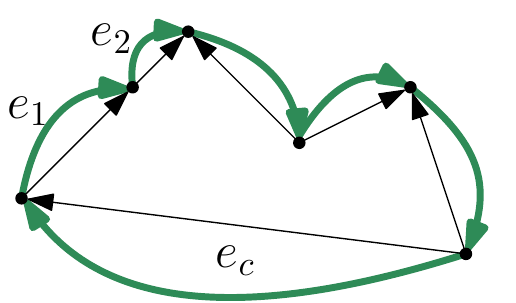}}
    \hfil
    \subfloat[\label{fig:cycle-c1}{{\bf Case 1.}}]
    {\includegraphics[width=0.3\columnwidth,page=2]{figures/cycle}}
    \hfil
    \subfloat[\label{fig:cycle-c2}{{\bf Case 2.}}]
    {\includegraphics[width=0.3\columnwidth,page=3]{figures/cycle}}
    \caption{Illustration for the proof of Lemma~\ref{le:acyclic}. In black (thin) we show the orientation of the edges according to $\mathcal O$, while in green (bold) according to $\mathcal O_1$.}
\end{figure}
 Some edges of $C$ have opposite orientations in $\mathcal O$ and $\mathcal O_1$, since $\mathcal O$ is acyclic. In particular, there is at least a non-empty maximal subsequence $S = \langle e_i,e_{i+1}, \dots, e_j \rangle$ of $C$ with this property. We distinguish two cases, whether  $C$ is oriented clockwise or counter-clockwise in a closed walk along its boundary. Let $a$ and $b$ be the origin of $e_i$ and the destination of $e_j$, respectively. Note that there is a directed path from $b$ to $a$ in $P_{\mathcal O}$ (and from $a$ to $b$ in $P_{\mathcal O_1}$).

\begin{inparaenum}[{\bf Case} 1.]

\item Refer to Fig.~\ref{fig:cycle-c1}. Since $e_j$ is oriented in $P_{\mathcal O_1}$, it belongs to the right path of a right wing or of a diamond $f$ of $P_{\mathcal O}$ by {\bf R.}\ref{r:rightwing} and {\bf R.}\ref{r:diamond}. Also, $b$ is the origin of $f$, as otherwise $b$ would have an incoming edge between $e_j$ and $e_{j+1}$  in counterclockwise order from $e_j$, which violates the bimodality of the edges around $b$ or the fact that the source $s$ of $P_{\mathcal O}$ is on the outer face. But then the orientation of $e_j$ in $\mathcal O_1$ contradicts {\bf R.}\ref{r:rightwing} or {\bf R.}\ref{r:diamond}. 

\item This case can be handled similarly by observing that $a$ is the destination of a face $f$ having $e_{i-1}$ in its right path.% (see Appendix and Fig.~\ref{fig:cycle-c2}).
\end{inparaenum}
\end{sketch}
For each maximal subsequence of the edges of $P_{\mathcal O_1}$ such that each edge is oriented and the induced subgraph is connected, compute a topological ordering. Concatenate all such topological orderings, and append at the beginning or at the end of the sequence possible vertices that are not incident to any oriented edge. This gives a total ordering  of the vertices of $P_{\mathcal O_1}$, denoted by $\sigma_1$. Set the $y$-coordinate of the top side of the rectangle representing the $i$-th vertex in $\sigma_1$ equal to $n-i+1$. Apply a symmetric procedure for $P_{\mathcal O_2}$, by computing a total ordering $\sigma_2$, and by setting the $y$-coordinate of the bottom side of the rectangle representing the $i$-th vertex in $\sigma_2$ equal to $i-n-1$. This concludes the construction of $\gamma$ (possible dummy edges inserted by the augmentation procedure of {\em Step 1(a)} are simply ignored in $\gamma$). The correctness of $\gamma$ easily follows.% (see Appendix).

\begin{lemma}\label{le:gamma}
$\gamma$ is a $1$-visible ZPR of $G$.
\end{lemma}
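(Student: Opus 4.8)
The plan is to verify separately the two defining properties of a $1$-visible ZPR: that $\gamma$ is a valid ZPR of $G$ (pairwise disjoint rectangles, every edge realized by an unobstructed $z$-parallel cylinder), and that the plane $Y=0$ witnesses $1$-visibility. I would begin with the structural observations. Since the top sides receive $y$-coordinates in $\{1,\dots,n\}$ and the bottom sides in $\{-n,\dots,-1\}$, every rectangle $R_v$ strictly straddles the plane $Y=0$, and its intersection with $Y=0$ is exactly the segment at $z=z(v)$ spanning $[x_L(v),x_R(v)]$, i.e., the bar $b_v$ of $\gamma_1$. Disjointness of the rectangles then follows from that of the bars: the $xz$-extents of $R_v$ are those of $b_v$, two rectangles with distinct $z$-coordinates lie in distinct parallel planes, and two coplanar rectangles inherit disjoint $x$-ranges from the corresponding bars; hence no two rectangles meet, regardless of their $y$-extents.

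The core of the argument is to realize every edge of $G$ by an unobstructed cylinder. Since $G\subseteq G'$ and $G'$ is the union of $P$ with the crossing pairs, it suffices to realize the visibilities of $\gamma_1$. The key observation is that a cylinder placed at a horizontal position $x_0$ is blocked by a rectangle $R_w$ (with $z(w)$ strictly between the two endpoints) only if $x_0$ lies in the $x$-range of $b_w$, since $R_w$ has the same $x$- and $z$-extents as $b_w$; thus obstruction is governed entirely by the configuration of bars $\gamma_1$. For each visibility of $\gamma_1$ that traverses no bar I would place a cylinder of radius $\varepsilon<\frac{1}{2}$ at $y=0$; as every rectangle contains $y=0$ in its interior and the cross-section of $\gamma$ with $Y=0$ is exactly $\gamma_1$, such a cylinder meets the same (empty) set of obstacles as the corresponding visibility in $\gamma_1$ and is therefore unobstructed -- this is essentially Lemma~\ref{le:gammaprime}, which survives the change of $y$-coordinates in Step~3 precisely because blocking is an $xz$-phenomenon. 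For a visibility that traverses a single bar $b_w$, the orientations enter: by rules \textbf{R.}\ref{r:rightwing}--\textbf{R.}\ref{r:diamond} the edges incident to $w$ are oriented so that the top side of $R_w$ lies below the top sides of both endpoints (for the cases handled through $\mathcal O_1$), or the bottom side of $R_w$ lies above the bottom sides of both endpoints (through $\mathcal O_2$). Lemma~\ref{le:acyclic} guarantees that these constraints are mutually consistent, so that the total orders $\sigma_1,\sigma_2$ realize them simultaneously in the final $y$-coordinates. I would then route the cylinder at the same $x_0$ but at a height $y^\ast$ strictly between $\mathrm{top}(R_w)$ and the smaller of the two endpoints' top sides (resp.\ strictly between the larger of the two endpoints' bottom sides and $\mathrm{bot}(R_w)$); since the coordinates are distinct integers, this gap is at least $1$, leaving room for radius $\varepsilon$. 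Each endpoint extends past the recessed side of $R_w$ and contains $y=0$, so both cylinder ends lie inside the endpoints' rectangles at $x_0$; the cylinder passes above (resp.\ below) $R_w$, and meets no other rectangle between the endpoints because $b_w$ is the only bar traversed at $x_0$ in $\gamma_1$. In the weak model cylinders need only avoid rectangles, not one another, so no disjointness among distinct edges is required.

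Finally, $1$-visibility is immediate from the first step: the intersection of $\gamma$ with $Y=0$ is precisely the set of bars $\{b_v\}$, which is the bar $1$-visibility representation $\gamma_1$ of $G$ computed in Step~1; hence $Y=0$ is a plane orthogonal to the rectangles whose cross-section defines a bar $1$-visibility representation of $G$. Combining this with the ZPR verification yields that $\gamma$ is a $1$-visible ZPR of $G$. I expect the main obstacle to be the traversing case: one must argue carefully that the single orientation selected for each wing or diamond simultaneously recesses the traversed rectangle and keeps both endpoints reachable, and that routing above or below it never collides with an unintended rectangle -- all of which rests on the acyclicity of $P_{\mathcal O_1}$ and $P_{\mathcal O_2}$ from Lemma~\ref{le:acyclic} and on the fact that obstruction is controlled entirely by the bar configuration $\gamma_1$.
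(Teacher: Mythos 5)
Your proposal is correct and follows essentially the same route as the paper's proof: it preserves the $\gamma_2$ visibilities because every rectangle still spans $[-1,1]$ in $y$, handles a traversing visibility by routing the cylinder just above the top side (or just below the bottom side) of the traversed rectangle using the unit gap guaranteed by the orientations $\mathcal O_1,\mathcal O_2$, rules out other obstructions because any blocking rectangle would correspond to a second traversed bar in $\gamma_1$, and obtains $1$-visibility from the fact that the cross-section at $Y=0$ is exactly $\gamma_1$. The only additions are explicit (and welcome) verifications of rectangle disjointness and of the fact that obstruction is determined purely by the $xz$-configuration, which the paper leaves implicit.
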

Since $\gamma_1$ takes $O(n^2)$ area, and each rectangle of $\gamma$ has height at most $2n$, it follows that $\gamma$ takes $O(n^3)$ volume. Also, each step of the algorithm can be performed in linear time. This concludes the proof of Theorem~\ref{thm:main}.

% ============================================================
\section{Open problems}
\label{sec:conclusions}

%We studied 3D visibility representations of $1$-planar graphs. 
%We proved that every $1$-planar graph is representable as ZPR that is $1$-planar, in the sense that a $1$-planar embedding can be derived from a suitable 2D projection. We remark that previous attempts to represent $1$-planar graphs as visibility representations either allowed visibilities to cross vertices, or represented the vertices as orthogonal polygons with several reflex corners.

Our research suggests interesting research directions, such as:
\begin{inparaenum}[(i)]
\item The algorithm in~\cite{DBLP:journals/jgaa/Brandenburg14} can be adjusted to compute bar $1$-visibility representations of optimal $2$-planar graphs~\cite{DBLP:journals/corr/Bekos0R17} (i.e., $2$-planar graphs with maximum density), and our construction can be also modified to obtain $1$-visible ZPRs for these graphs.  Does every $2$-planar graph admit a $1$-visible ZPR?
\item Can we generalize our result so to prove that every graph admitting a bar $1$-visibility representation also admits a $1$-visible ZPR?
\item Our algorithm computes ZPRs in which all the rectangles are intersected by the plane $Y=0$. Can this plane contain all bottom sides of the rectangles? If this is not possible, we wonder if every $1$-planar graph admits a 2.5D-visibility representation (i.e., vertices are axis-aligned boxes whose bottom faces lie on a same plane, and visibilities are both vertical and horizontal).
\end{inparaenum}
% ============================================================

%\clearpage

\bibliographystyle{splncs03}
\bibliography{zpr}

\begin{thebibliography}{10}
\providecommand{\url}[1]{\texttt{#1}}
\providecommand{\urlprefix}{URL }

\bibitem{DBLP:conf/gd/ArleoBGEGLMMWW16}
Arleo, A., Binucci, C., {Di Giacomo}, E., Evans, W.S., Grilli, L., Liotta, G.,
  Meijer, H., Montecchiani, F., Whitesides, S., Wismath, S.K.: Visibility
  representations of boxes in 2.5 dimensions. In: Hu, Y., N{\"{o}}llenburg, M.
  (eds.) {GD} 2016. LNCS, vol. 9801, pp. 251--265. Springer (2016)

\bibitem{DBLP:journals/corr/Bekos0R17}
Bekos, M.A., Kaufmann, M., Raftopoulou, C.N.: On optimal 2- and 3-planar
  graphs. In: Aronov, B., Katz, M.J. (eds.) {SoCG 2017}. LIPIcs, vol.~77, pp.
  16:1--16:16. Schloss Dagstuhl - Leibniz-Zentrum fuer Informatik (2017)

\bibitem{DBLP:conf/compgeom/BiedlLM16}
Biedl, T.C., Liotta, G., Montecchiani, F.: On visibility representations of
  non-planar graphs. In: Fekete, S.P., Lubiw, A. (eds.) {SoCG} 2016. LIPIcs,
  vol.~51, pp. 19:1--19:16. Schloss Dagstuhl - Leibniz-Zentrum fuer Informatik
  (2016)

\bibitem{bsw-bsgr-AMS83}
Bodendiek, R., Schumacher, H.J., Wagner, K.: {B}emerkungen zu einem
  {S}echsfarbenproblem von {G}. {R}ingel. Abh. Math. Sem. Univ. Hamburg  53,
  41--52 (1983)

\bibitem{DBLP:journals/jgaa/BoseEFHLMRRSWZ98}
Bose, P., Everett, H., Fekete, S.P., Houle, M.E., Lubiw, A., Meijer, H.,
  Romanik, K., Rote, G., Shermer, T.C., Whitesides, S., Zelle, C.: A visibility
  representation for graphs in three dimensions. J. Graph Algorithms Appl.
  2(3),  1--16 (1998)

\bibitem{DBLP:journals/jgaa/Brandenburg14}
Brandenburg, F.J.: 1-visibility representations of 1-planar graphs. J. Graph
  Algorithms Appl.  18(3),  421--438 (2014)

\bibitem{DBLP:journals/jgaa/DeanEGLST07}
Dean, A.M., Evans, W.S., Gethner, E., Laison, J.D., Safari, M.A., Trotter,
  W.T.: Bar $k$-visibility graphs. J. Graph Algorithms Appl.  11(1),  45--59
  (2007)

\bibitem{DBLP:journals/dam/DeanH97}
Dean, A.M., Hutchinson, J.P.: Rectangle-visibility representations of bipartite
  graphs. Discr. Appl. Math.  75(1),  9--25 (1997)

\bibitem{DBLP:books/ph/BattistaETT99}
{Di Battista}, G., Eades, P., Tamassia, R., Tollis, I.G.: Graph Drawing:
  Algorithms for the Visualization of Graphs. Prentice-Hall (1999)

\bibitem{DBLP:journals/tcs/BattistaT88}
{Di Battista}, G., Tamassia, R.: Algorithms for plane representations of
  acyclic digraphs. Theor. Comput. Sci.  61,  175--198 (1988)

\bibitem{DBLP:conf/gd/GiacomoDELMMW16}
{Di Giacomo}, E., Didimo, W., Evans, W.S., Liotta, G., Meijer, H.,
  Montecchiani, F., Wismath, S.K.: Ortho-polygon visibility representations of
  embedded graphs. In: Hu, Y., N{\"{o}}llenburg, M. (eds.) {GD} 2016. LNCS,
  vol. 9801, pp. 280--294. Springer (2016)

\bibitem{DBLP:journals/jgaa/Evans0LMW14}
Evans, W.S., Kaufmann, M., Lenhart, W., Mchedlidze, T., Wismath, S.K.: Bar
  $1$-visibility graphs vs. other nearly planar graphs. J. Graph Algorithms
  Appl.  18(5),  721--739 (2014)

\bibitem{Fekete1995}
Fekete, S.P., Houle, M.E., Whitesides, S.: New results on a visibility
  representation of graphs in {3D}. In: Brandenburg, F. (ed.) {GD} 1995. LNCS,
  vol. 1027, pp. 234--241. Springer (1995)

\bibitem{DBLP:journals/ijcga/FeketeM99}
Fekete, S.P., Meijer, H.: Rectangle and box visibility graphs in {3D}. Int. J.
  Comput. Geometry Appl.  9(1),  1--28 (1999)

\bibitem{DBLP:journals/algorithmica/GrigorievB07}
Grigoriev, A., Bodlaender, H.L.: Algorithms for graphs embeddable with few
  crossings per edge. Algorithmica  49(1),  1--11 (2007)

\bibitem{hong_et_al:DR:2017:7038}
Hong, S.H., Kaufmann, M., Kobourov, S.G., Pach, J.: {Beyond-Planar Graphs:
  Algorithmics and Combinatorics (Dagstuhl Seminar 16452)}. Dagstuhl Reports
  6(11),  35--62 (2017)

\bibitem{DBLP:journals/comgeo/HutchinsonSV99}
Hutchinson, J.P., Shermer, T.C., Vince, A.: On representations of some
  thickness-two graphs. Comput. Geom.  13(3),  161--171 (1999)

\bibitem{DBLP:journals/corr/KobourovLM17}
Kobourov, S.G., Liotta, G., Montecchiani, F.: An annotated bibliography on
  1-planarity. CoRR  abs/1703.02261 (2017), to appear in Computer Science
  Review, doi: https://doi.org/10.1016/j.cosrev.2017.06.002

\bibitem{DBLP:journals/jgt/KorzhikM13}
Korzhik, V.P., Mohar, B.: Minimal obstructions for 1-immersions and hardness of
  1-planarity testing. J. Graph Theory  72(1),  30--71 (2013)

\bibitem{DBLP:conf/ictcs/Liotta14}
Liotta, G.: Graph drawing beyond planarity: some results and open problems. In:
  {ICTCS} 2014. {CEUR} Workshop Proc., vol. 1231, pp. 3--8. CEUR-WS.org (2014)

\bibitem{pt-gdfce-C97}
Pach, J., T\'oth, G.: Graphs drawn with few crossings per edge. Combinatorica
  17(3),  427--439 (1997)

\bibitem{DBLP:journals/dcg/RosenstiehlT86}
Rosenstiehl, P., Tarjan, R.E.: Rectilinear planar layouts and bipolar
  orientations of planar graphs. Discr. {\&} Comput. Geom.  1,  343--353 (1986)

\bibitem{DBLP:conf/cccg/Shermer96}
Shermer, T.C.: On rectangle visibility graphs. {III.} {E}xternal visibility and
  complexity. In: Fiala, F., Kranakis, E., Sack, J. (eds.) {CCCG} 1996. pp.
  234--239. Carleton University Press (1996)

\bibitem{DBLP:conf/gd/Stola08}
{\v{S}}tola, J.: Unimaximal sequences of pairs in rectangle visibility drawing.
  In: Tollis, I.G., Patrignani, M. (eds.) {GD} 2008. LNCS, vol. 5417, pp.
  61--66. Springer (2009)

\bibitem{DBLP:journals/dcg/TamassiaT86}
Tamassia, R., Tollis, I.G.: A unified approach a visibility representation of
  planar graphs. Discr. {\&} Comput. Geom.  1,  321--341 (1986)

\bibitem{t-prg-84}
Thomassen, C.: Plane representations of graphs. In: Progress in Graph Theory.
  pp. 43--69. AP (1984)

\bibitem{DBLP:conf/compgeom/Wismath85}
Wismath, S.K.: Characterizing bar line-of-sight graphs. In: O'Rourke, J. (ed.)
  {SoCG} 1985. pp. 147--152. {ACM} (1985)

\end{thebibliography}

\clearpage
\appendix
\section*{Appendix}

\setcounter{lemma}{1}
\begin{lemma}
Both $P_{\mathcal O_1}$ and $P_{\mathcal O_2}$ are acyclic.
\end{lemma}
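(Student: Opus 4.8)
The plan is to show that $P_{\mathcal O_1}$ is acyclic; the case of $P_{\mathcal O_2}$ then follows by the symmetry between right and left wings (and between the top and bottom sides of the rectangles). First I would record three facts. (a) As a planar $st$-multigraph, $P_{\mathcal O}$ is acyclic, its source $s$ and sink $t$ lie on the outer face, and the in- and out-edges around every vertex are bimodal. (b) By {\bf R.}\ref{r:rightwing} and {\bf R.}\ref{r:diamond}, every edge oriented in $\mathcal O_1$ lies on the right path of a right wing or a diamond $f$; each such $f$ contributes exactly two oriented edges, one keeping its $\mathcal O$-direction (\emph{consistent}) and one directed opposite to $\mathcal O$ (\emph{reversed}). (c) A direct inspection of the rules shows that a reversed edge is never the first edge of a right path (the edge leaving the origin of $f$), and a consistent edge is never the last edge of a right path (the edge entering the destination of $f$). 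The facts in (c) are what will ultimately be contradicted.

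Next I would assume, for a contradiction, that $P_{\mathcal O_1}$ contains a directed cycle $C=\langle e_1,\dots,e_c\rangle$. I would dispose of $c=2$ exactly as in the sketch: a directed $2$-cycle forces two non-homotopic parallel edges to both lie on a right path of a wing or diamond, which is incompatible with the empty-kite structure produced in \emph{Step 1(a)}. For $c>2$, since $\mathcal O$ is acyclic the cycle $C$ cannot consist only of consistent edges, so it contains a non-empty maximal run $S=\langle e_i,\dots,e_j\rangle$ of reversed edges. Let $a$ be the $\mathcal O_1$-tail of $e_i$ and $b$ the $\mathcal O_1$-head of $e_j$; then $S$ is an $a\to b$ path in $P_{\mathcal O_1}$ whose edges, read under $\mathcal O$, form a $b\to a$ path in $P_{\mathcal O}$. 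As $C$ is a closed curve in the plane, I would split into two cases according to whether $C$ is traversed clockwise or counterclockwise.

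The decisive step is local, at the ends of the run $S$. Take the counterclockwise case and focus on $b$. The edge $e_{j+1}$ is consistent (by maximality of $S$), so it leaves $b$ under $\mathcal O$; and since $e_j$ is reversed, it too leaves $b$ under $\mathcal O$. Hence $b$ has two outgoing edges in $\mathcal O$, and $e_j$ lies on the right path of the face $f$ that oriented it. I would then argue that $b$ is the \emph{origin} of $f$: if not, $b$ would be an interior right-path vertex of $f$, carrying an $\mathcal O$-incoming right-path edge that the rotation at $b$ places between the two outgoing edges $e_j$ and $e_{j+1}$, violating bimodality at $b$ (the boundary subcase is excluded because the only vertex allowed to behave like a source is $s$, which sits on the outer face). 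But if $b$ is the origin of $f$, then $e_j$ is the first edge of the right path of $f$, and by fact (c) no rule reverses that edge, contradicting $e_j\in S$. The clockwise case is symmetric at $a$: one shows that $a$ is the \emph{destination} of the face $f'$ whose right path contains the consistent edge $e_{i-1}$, whence $e_{i-1}$ would be the last edge of a right path while being consistent, again contradicting fact (c).

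I expect the bimodality argument of the previous paragraph to be the main obstacle. The difficulty is purely combinatorial-topological: one must read the rotation system at $b$ (resp.\ $a$) correctly, track on which side of the oriented cycle $C$ the witnessing face $f$ lies, and handle the boundary subcase involving the outer face and the global source $s$. Once these are settled so that $b$ (resp.\ $a$) is forced to be the origin (resp.\ destination) of $f$, the contradiction with {\bf R.}\ref{r:rightwing} and {\bf R.}\ref{r:diamond} via fact (c) is immediate, and the same reasoning applied to left wings yields the acyclicity of $P_{\mathcal O_2}$.
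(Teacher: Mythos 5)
Your proposal is correct and follows essentially the same argument as the paper's proof: dispose of $c=2$ via the empty-kite structure, take a maximal run of edges reversed with respect to $\mathcal O$, and use bimodality and the position of $s$ (resp.\ $t$) on the outer face to force $b$ (resp.\ $a$) to be the origin (resp.\ destination) of the witnessing face, contradicting rules {\bf R.}\ref{r:rightwing} and {\bf R.}\ref{r:diamond}. Your explicit ``fact (c)'' (a reversed edge is never the first edge of a right path, a consistent edge never the last) is just a clean statement of the contradiction the paper leaves implicit in its final sentence of each case.
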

\begin{proof}
We  prove that  $P_{\mathcal O_1}$ is acyclic. The argument for  $P_{\mathcal O_2}$ is symmetric.
Suppose, for a contradiction, that $P_{\mathcal O_1}$ contains a directed cycle $C =\langle e_1,e_2,\dots,e_c \rangle$, as shown in Fig.~\ref{fig:cycle}.
First, note that $c>2$. If $c=2$, there are two non-homotopic parallel edges that are both part of the right path of a right wing or of a diamond in $P_\mathcal O$. But this is impossible since each pair of crossing edges in $G'$ forms an empty kite.
 Some edges of $C$ have opposite orientations in $\mathcal O$ and $\mathcal O_1$, since $\mathcal O$ is acyclic. In particular, there is at least a non-empty maximal subsequence $S = \langle e_i,e_{i+1}, \dots, e_j \rangle$ of $C$ with this property. We distinguish two cases, whether  $C$ is oriented clockwise or counter-clockwise in a closed walk along its boundary. Let $a$ and $b$ be the origin of $e_i$ and the destination of $e_j$, respectively. Note that there is a directed path from $b$ to $a$ in $P_{\mathcal O}$ (and from $a$ to $b$ in $P_{\mathcal O_1}$).

\begin{inparaenum}[{\bf Case} 1.]

\item Refer to Fig.~\ref{fig:cycle-c1}. Since $e_j$ is oriented in $P_{\mathcal O_1}$, it belongs to the right path of a right wing or of a diamond $f$ of $P_{\mathcal O}$ by {\bf R.}\ref{r:rightwing} and {\bf R.}\ref{r:diamond}. Also, $b$ is the origin of $f$, as otherwise $b$ would have an incoming edge between $e_j$ and $e_{j+1}$  in counterclockwise order from $e_j$, which violates the bimodality of the edges around $b$ or the fact that the source $s$ of $P_{\mathcal O}$ is on the outer face (recall that $e_j$ is directed outgoing from $b$ in $P_{\mathcal O}$). But then the orientation of $e_j$ in $\mathcal O_1$ contradicts {\bf R.}\ref{r:rightwing} or {\bf R.}\ref{r:diamond}.

\item Refer to Fig.~\ref{fig:cycle-c2}. Since edge $e_{i-1}$ is oriented in $P_{\mathcal O_1}$, it belongs to the right path of a right wing or of a diamond $f$ of $P_{\mathcal O}$ by {\bf R.}\ref{r:rightwing} and {\bf R.}\ref{r:diamond}. Also, $a$ is the destination of $f$, as otherwise $a$ would have an outgoing edge between $e_i$ and $e_{i-1}$  in counterclockwise order from $e_i$, which violates the bimodality of the edges around $a$ or the fact that the sink $t$ of $P_{\mathcal O}$ is on the outer face  (recall that $e_{i-1}$ is directed towards $b$ in $P_{\mathcal O}$). But then the orientation of $e_{i-1}$ in $\mathcal O_1$ contradicts {\bf R.}\ref{r:rightwing} or {\bf R.}\ref{r:diamond}.
\end{inparaenum}
\end{proof}

\begin{lemma}
$\gamma$ is a $1$-visible ZPR of $G$.
\end{lemma}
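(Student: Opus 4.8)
The plan is to verify the two defining properties of a $1$-visible ZPR in turn: that $\gamma$ is a valid ZPR of $G$, and that the plane $Y=0$ witnesses $1$-visibility. I would dispatch $1$-visibility first, since it is almost immediate from the coordinate assignment. Every top side receives a $y$-coordinate $n-i+1\in[1,n]$ and every bottom side a $y$-coordinate $i-n-1\in[-n,-1]$, so each rectangle $R_v$ straddles the plane $Y=0$, and $R_v\cap\{Y=0\}$ is exactly the segment at height $z(v)$ spanning $[x_L(v),x_R(v)]$, i.e.\ the bar $b_v$. As $Y=0$ is the $xz$-plane, it is orthogonal to the rectangles (which are parallel to the $xy$-plane), and its intersection with $\gamma$ reproduces $\gamma_1$, a bar $1$-visibility representation of $G$. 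Hence, once $\gamma$ is shown to be a ZPR, it is automatically $1$-visible.

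Next I would handle disjointness and the edges of $G$ realized without traversing a bar. Two rectangles in distinct planes $Z=z(u)\neq z(v)$ are trivially disjoint; if $z(u)=z(v)$, then $b_u$ and $b_v$ are disjoint bars on a common horizontal line of $\gamma_1$, so $R_u$ and $R_v$ have disjoint $x$-ranges. For an edge whose $\gamma_1$-visibility traverses no bar (an edge of $P'$), I would reuse the cylinder of Lemma~\ref{le:gammaprime}, placed at a point $(\hat x,0)$ of radius $\varepsilon<\tfrac12$, where $\hat x$ is the $x$-coordinate of the visibility. Since every rectangle contains the slab $-1\le y\le 1$, the enlargements performed in Step~3 cannot make any rectangle meet a cylinder centered at $y=0$; and at each intermediate plane $Z=z(w')$ we have $\hat x\notin[x_L(w'),x_R(w')]$ because the visibility traverses no bar, so the cylinder meets no rectangle. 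Thus these visibilities survive Step~3 unchanged.

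The crux is an edge whose $\gamma_1$-visibility traverses exactly one bar $b_w$ (recall each visibility traverses at most one bar, by Step~1(c)). Here I would show that rules \textbf{R.}\ref{r:rightwing}--\textbf{R.}\ref{r:diamond}, via the orderings $\sigma_1,\sigma_2$, place \emph{both} endpoints $p,q$ of the visibility strictly above (resp.\ below) $R_w$ in the $y$-direction. Because $P_{\mathcal O_1}$ is acyclic (Lemma~\ref{le:acyclic}) and $\sigma_1$ concatenates topological orders of its connected components, an edge $u\to v$ of $\mathcal O_1$ forces $y_{\mathrm{top}}(u)>y_{\mathrm{top}}(v)$, and symmetrically $u\to v$ in $\mathcal O_2$ forces $y_{\mathrm{bot}}(u)<y_{\mathrm{bot}}(v)$. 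Checking each rule, a right wing (\textbf{R.}\ref{r:rightwing}) or the $\mathcal O_1$-case of a diamond (\textbf{R.}\ref{r:diamond}) orients both endpoints towards $w$ in $\mathcal O_1$, giving $y_{\mathrm{top}}(p),y_{\mathrm{top}}(q)>y_{\mathrm{top}}(w)$; symmetrically a left wing (\textbf{R.}\ref{r:leftwing}) and the $\mathcal O_2$-case of a diamond yield $y_{\mathrm{bot}}(p),y_{\mathrm{bot}}(q)<y_{\mathrm{bot}}(w)$. In the first situation I would route the cylinder at $(\hat x,y^{*})$ with $y^{*}=y_{\mathrm{top}}(w)+\tfrac12$ and $\varepsilon<\tfrac12$; by integrality of the coordinates the disk of radius $\varepsilon$ around $(\hat x,y^{*})$ lies strictly above $R_w$ yet inside both $R_p$ and $R_q$, and at every intermediate plane $Z=z(w')$ with $w'\neq w$ we again have $\hat x\notin[x_L(w'),x_R(w')]$, so the cylinder meets no rectangle. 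The second situation is handled symmetrically below $R_w$.

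The main obstacle is exactly this last step: proving that the two endpoints of each traversing visibility end up on the \emph{same} side of $R_w$. This is where the design of \textbf{R.}\ref{r:rightwing}--\textbf{R.}\ref{r:diamond} and the acyclicity of $P_{\mathcal O_1}$ and $P_{\mathcal O_2}$ are essential: they guarantee that the local ``above/below'' demands of all kites can be satisfied simultaneously by a single global pair of orderings $\sigma_1,\sigma_2$, never forcing one endpoint above and the other below the traversed bar. The remaining checks---disjointness, the non-traversing edges, and the cylinder avoiding all other rectangles (which relies on each visibility traversing at most one bar)---are routine once this conflict-freeness is in place.
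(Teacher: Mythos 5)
Your proposal is correct and follows essentially the same route as the paper's proof: preserve the visibilities of $\gamma_2$ by noting all top (resp.\ bottom) sides land in $[1,n]$ (resp.\ $[-n,-1]$), route each traversing visibility just above or below $R_w$ using the fact that the rules orient both endpoints towards the traversed vertex so that $\sigma_1$ (or $\sigma_2$) places both strictly above (or below) it, observe that no other rectangle can obstruct since otherwise its bar would be traversed in $\gamma_1$, and obtain $1$-visibility from $\gamma_1$ being the intersection with $Y=0$. Your rule-by-rule verification of the ``same side'' claim is slightly more explicit than the paper's one-line appeal to the construction, but the argument is the same.
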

\begin{proof}
By Lemma~\ref{le:gammaprime} we know that $\gamma_2$ realizes all the edges of $G$ whose visibilities do not cross any bar in $\gamma_1$. Note that the top sides of the first and of the last vertex of $\sigma_1$ receive $y$-coordinates $n$ and $1$, respectively. Similarly,  the bottom sides of the first and of the last vertex of $\sigma_2$ receive $y$-coordinates $-n$ and $-1$, respectively. Hence all visibilities in $\gamma_2$ are preserved in $\gamma$.

Each visibility connecting a bar $b_u$ to a bar $b_v$ and traversing a bar $b_w$ in $\gamma_1$ can now be replaced with a cylinder of radius $\varepsilon < \frac{1}{2}$ and $y$-coordinate equal either to the one of the top side of $R_w$ plus $\frac{1}{2}$ or to the bottom side of $R_w$ minus $\frac{1}{2}$. In fact, the above construction ensures the top sides of $R_v$ and $R_u$ have $y$-coordinates greater than the one of $R_w$ (by at least one unit), or that the bottom sides of $R_v$ and $R_u$ have $y$-coordinates smaller than the one of $R_w$. Also, there is no rectangle $R_q$  that obstructs the visibility $(u,v)$, as otherwise $b_q$ would be traversed by $(u,v)$ in $\gamma_1$, which is not possible.

The $1$-visibility of $\gamma$ is obtained by construction, being $\gamma_1$ the intersection of $\gamma$ with the plane $Y=0$.
\end{proof}

\begin{figure}[h!]
    \centering
    \subfloat[\label{fig:s1-g}{$G$}]
    {\includegraphics[width=0.37\columnwidth,page=1]{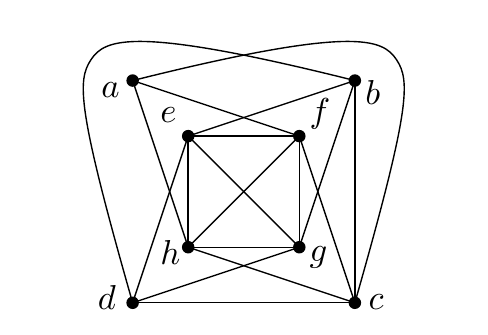}}
    \hfil
    \subfloat[\label{fig:s1-gp}{$G'$}]
    {\includegraphics[width=0.37\columnwidth,page=2]{figures/big-example}}
    \\
    \subfloat[\label{fig:s1-po}{$P_{\mathcal O}$}]
    {\includegraphics[width=0.37\columnwidth,page=3]{figures/big-example}}
    \hfil
    \subfloat[\label{fig:s1-gamma}{$\gamma_1$}]
    {\includegraphics[width=0.37\columnwidth,page=4]{figures/big-example}}
    \\
    \subfloat[\label{fig:s1-p01}{$P_{\mathcal O_1}$}]
    {\includegraphics[width=0.37\columnwidth,page=5]{figures/big-example}}
    \hfil
    \subfloat[\label{fig:s1-p02}{$P_{\mathcal O_2}$}]
    {\includegraphics[width=0.37\columnwidth,page=6]{figures/big-example}}

    \subfloat[\label{fig:s1-gammaxz}{$yz$-projection of $\gamma$}]
    {\includegraphics[width=0.37\columnwidth,page=7]{figures/big-example}}
    \caption{
    Running example for the algorithm. In (g), we only show (in red) the visibilities that cross a bar in (d). For the sake of presentation,  we chose two total orderings for $P_{\mathcal O_1}$ and $P_{\mathcal O_2}$ such that no red  visibility crosses the projection of a rectangle. The two partial orderings are $\sigma_1=\{t,a,h,f,b,e,g,s\}$ and $\sigma_2=\{h,t,g,e,b,f,a,s\}$. \label{fig:running}}
\end{figure}
\end{document}